\documentclass[12pt]{amsart}
\usepackage{tikz-cd}
\usepackage{amssymb}
\usepackage{amsmath}
\usepackage{graphicx,color}
\usepackage{wrapfig,framed}
\usepackage[height=22cm, width=16cm, hmarginratio={1:1}]{geometry}
\usepackage{hyperref}
\usetikzlibrary{decorations.markings}
\tikzset{double line with arrow/.style args={#1,#2}{decorate,decoration={markings,%
mark=at position 0 with {\coordinate (ta-base-1) at (0,1pt);
\coordinate (ta-base-2) at (0,-1pt);},
mark=at position 1 with {\draw[#1] (ta-base-1) -- (0,1pt);
\draw[#2] (ta-base-2) -- (0,-1pt);
}}}}

\theoremstyle{plain}
\newtheorem{theorem}{Theorem}

\newtheorem{lemma}{Lemma}

\theoremstyle{definition}

\newtheorem{remark}{Remark}

\newcommand{\beq}{\begin{equation}}
\newcommand{\eeq}{\end{equation}}
\newcommand{\nn}{\nonumber}
\newcommand{\F}{\mathcal{F}}
\newcommand{\A}{\mathcal{A}}

\newcommand{\ZZ}{{\mathbb Z}}
\newcommand{\tr}{{\rm tr}}
\newcommand{\e}{\epsilon}
\newcommand{\p}{\partial}

\begin{document}

\title{KdV integrability in GUE correlators}
\author{Di Yang}
\address{School of Mathematical Sciences, University of Science and Technology of China, Hefei 230026, P.R.~China}
\email{diyang@ustc.edu.cn}
\date{}
\begin{abstract}
Okounkov~\cite{O00} proved a remarkable formula relating 
$n$-point GUE (Gaussian unitary ensemble) correlators of a fixed genus to  
Witten's intersection numbers of the same genus.
The partition function of GUE correlators is a tau-function for the Toda lattice hierarchy. In this note, 
based on the knowledge of these two statements we give a new proof of the Witten--Kontsevich theorem, 
 that relates Witten's intersection numbers to the KdV (Korteweg--de Vries) integrable hierarchy.  
\end{abstract}

\maketitle

\setcounter{tocdepth}{1}
\tableofcontents

\section{Introduction}\label{section1}
The Korteweg--de Vries (KdV) equation
\beq\label{KdVeq}
\frac{\p u}{\p t_1} = u \frac{\p u}{\p t_0} + \frac1{12} \frac{\p^3 u}{\p t_0^3}
\eeq
 is a nonlinear evolutionary partial differential equation (PDE).
Since the late 1960s, it has been known from the theory of integrable systems that the KdV
 equation~\eqref{KdVeq} can be extended to an infinite system of pairwise commuting evolutionary PDEs, 
called the {\it KdV hierarchy}, which can be written compactly as follows:
\beq\label{KdVhie}
\frac{\p u}{\p t_d} = \frac{1}{(2d+1)!!} \bigl[\bigl(L^{\frac{2d+1}{2}}\bigr)_+,L\bigr], \quad d\ge1.
\eeq
Here, $L:=\p_{t_0}^2 + 2 u$ is called the Lax operator, $[\,,\,]$ denotes the commutator, 
and for a pseudodifferential operator $P$ the notation $P_+$ means taking the nonnegative 
part of~$P$ (see e.g.~\cite{Dickey03} for more details).
The $d=1$ equation in~\eqref{KdVhie} coincides with~\eqref{KdVeq}. 
Note that the normalization of KdV flows~\eqref{KdVhie} differs from~\cite{Dickey03} by rescalings.

Let $g,n$ be nonnegative integers satisfying the stability condition $2g-2+n>0$, and $\overline{\mathcal M}_{g,n}$ the Deligne--Mumford 
moduli space of stable algebraic curves of genus~$g$ with $n$~distinct marked points.
Denote by $\psi_a$ the first Chern class of the $a$th tautological line bundle on $\overline{\mathcal M}_{g,n}$, $a=1,\dots,n$.
They are often called {\it psi-classes}. The integrals 
\beq\label{psiintegralsWnotation}
 \int_{\overline{\mathcal M}_{g,n}} \psi_1^{d_1}\cdots \psi_n^{d_n} = :\langle \tau_{d_1} \dots \tau_{d_n}\rangle_g
\eeq
are called {\it psi-class intersection numbers} or {\it Witten's intersection numbers}, which
vanish unless the degree and the dimension match:
$d_1+\dots+d_n= 3g-3+n$. Here, \hbox{$d_1,\dots,d_n\ge0$}. The integrals~\eqref{psiintegralsWnotation} will be understood as~0 
if $2g-2+n\le0$. Let ${\bf t}=(t_d)_{d\ge0}$ be indeterminates.
The {\it Witten free energy} $\F=\F({\bf t})$ is the power series of~${\bf t}$ defined by
\beq\label{Wittenfreeenergy}
\F({\bf t})=\sum_{g\geq0} \sum_{n\geq 0} 
\sum_{d_1, \dots, d_n\geq 0} \frac{\langle \tau_{d_1} \dots \tau_{d_n}\rangle_g}{n!} t_{d_1}\cdots t_{d_n}.
\eeq

In~\cite{Wi91} Witten proposed a striking conjecture:

\smallskip

\noindent {\bf Witten's conjecture} (\cite{Wi91}). 
{\it The function $u := \p^2 \F/\p t_0^2$ obeys the KdV hierarchy~\eqref{KdVhie}.}

\smallskip 

Witten's conjecture was first proved by Kontsevich~\cite{Ko92}, and is now 
 known as the {\it Witten--Kontsevich theorem}. We refer to~\cite{AIS21, Bur17, CLL08, ELSV01, KL07, KimLiu09, Mir07, O00, O02, OP09} 
 about other proofs.
 Note that, to prove the Witten--Kontsevich theorem, 
 it suffices~\cite{KL07, LX11-1} to prove the $d=1$ case of~\eqref{KdVhie}, i.e., the KdV equation for psi-class intersection numbers. Indeed,
Witten proved~\cite{Wi91} that psi-class intersection numbers obey the following two equations
\begin{align}
&\langle\tau_0\tau_{d_1}\dots\tau_{d_n}\rangle_g=\sum_{1\le a \le n,\, d_a>0} 
\langle \tau_{d_a-1} \prod_{b\neq a} \tau_{d_b} \rangle_g + \delta_{n,2}\delta_{g,0}, \label{stringeqint}\\
&\langle\tau_1 \tau_{d_1}\dots\tau_{d_n}\rangle_g= (2g-2+n)\langle \tau_{d_1}\dots\tau_{d_n}\rangle_g + \frac1{24} \delta_{g,1} \delta_{n,0},\label{dilatoneqint}
\end{align}
called {\it string equation} and {\it dilaton equation}, respectively, and, according to~\cite{KL07, LX11-1}, 
validity of \eqref{stringeqint}, \eqref{dilatoneqint} and the KdV equation for intersection numbers implies validity of~\eqref{KdVhie}.

In Kontsevich's proof~\cite{Ko92} of Witten's conjecture, a remarkable identity was established:
\beq\label{Kontsmi}
\sum_{d_1,\dots,d_n\ge0} \langle \tau_{d_1}\dots \tau_{d_n}\rangle_g \prod_{a=1}^n \frac{(2d_a-1)!!}{z_a^{2d_a+1}} 
= \sum_{G\in G_{g,n}^3} \frac{2^{2g-2+n}}{|{\rm Aut} (G)|} \prod_{e\in E(G)} \frac{1}{\tilde z(e)}
\eeq
(see also~\cite{OP09}).
 Here, $g,n\ge0$ satisfy $2g-2+n>0$, $G_{g,n}^3$ denotes the set of trivalent maps on an oriented closed surface of 
 genus~$g$ with $n$ cells marked by $1,\dots,n$, 
 $\rm {Aut}(G)$ is the finite group of symmetries 
 generated by orientation preserving homeomorphisms of the surface
 that map $G$ to~$G$ and respect the markings, $E(G)$ denotes the edge set of~$G$, 
 and the notation $\tilde z(e)$ is defined as follows:
let the variables $z_1,\dots,z_n$ correspond to the markings of $G\in G_{g,n}^3$; each edge $e\in E$ borders two cells; let $i$ and $j$ be the labels assigned to these cells (if both sides of~$e$ border the same cell then $i = j$); $\tilde z(e):=z_i+z_j$.

We call~\eqref{Kontsmi} the {\it Kontsevich main identity}. 
This identity connects psi-class intersection numbers to combinatorics of trivalent maps.
In order to obtain the KdV integrability, Kontsevich~\cite{Ko92}  
constructed the ``matrix Airy function" (the 
Kontsevich matrix model) and completed his proof of Witten's conjecture. 
Another proof of the KdV integrability, again after the Kontsevich main identity~\eqref{Kontsmi} is established,
is given by \hbox{Okounkov~\cite{O02}} 
using a limit formula (see~\cite[(2.7) and Proposition~1]{O00}, \cite{OP09} or~\eqref{Ok} below) and the edge-of-the-spectrum model~\cite{O00, O02, OP09}.

Enumeration of maps, or say ribbon graphs, appeared naturally in 
the GUE (Gaussian unitary ensemble) random matrix model~\cite{BIZ80, DFGZJ95, HZ86, OP09, Wi91},
which, by the theory of orthogonal polynomials, is well known to be
related to the Toda lattice hierarchy ({\it aka} the 1D Toda lattice or the Toda chain), 
and more specifically to the Volterra lattice hierarchy ({\it aka} the discrete KdV flows) when restricted 
to even couplings; see~e.g. \cite{Deift99, DFGZJ95, DLYZ20, DY17, DY17-2, GMMMO91, Mar91, Wi91}. 
The goal of this paper is to prove Witten's KdV equation from Toda integrability based on the 
above-mentioned limit formula of~\cite{O00}. We note that
the KdV integrability was discovered by taking double-scaling/continuum limit in matrix gravity (cf.~\cite{BDSS90, BK90, DFGZJ95, D90, 
DS90, GMMMO91, GM90, M94, Wi91}) and our proof resembles this original idea.

Let $N$ be a positive integer and $\mathcal{H}(N)$ 
 the space of hermitian matrices of the finite size~$N$. 
By {\it GUE correlators} we mean the normalized Gaussian integrals (cf.~\cite{BIZ80, BIPZ78, DFGZJ95, HZ86})
\beq\label{Gaussianintegral}
\frac{\int_{\mathcal{H}(N)} \tr (M^{i_1}) \dots \tr (M^{i_n}) e^{-\frac1{2} \tr (M^2)} dM}{\int_{\mathcal{H}(N)} e^{-\frac12 \tr (M^2)} dM} =: 
\langle \tr M^{i_1} \dots \tr M^{i_n} \rangle,
\eeq 
where $n\ge0$, $i_1,\dots,i_n\ge1$, and 
$dM := \prod_{1\leq i\leq N} d M_{ii} \prod_{1\leq i<j\leq N} d{\rm Re} M_{ij} d{\rm Im}M_{ij}$. 
They are polynomials of~$N$ and can be computed by the Wick rule. 
By $\langle \tr M^{i_1} \dots \tr M^{i_n} \rangle_c$ we denote the connected ones. 
Namely, when applying the Wick rule to the computation of~\eqref{Gaussianintegral} we keep summation over connected Feynman diagrams only.
According to~\cite{BIZ80, DFGZJ95, HZ86, O00, OP09}, 
\beq
\langle \tr M^{i_1} \dots \tr M^{i_n} \rangle_c = \sum_{g=0}^{[\frac{|i|}{4}+\frac12-\frac{n}2]}  {\rm Map}_{g}(i_1,\dots,i_n) N^{2-2g+\frac{|i|}{2}-n},
\eeq
where $|i|:=i_1+\dots+i_n$, and
${\rm Map}_{g}(i_1,\dots,i_n)$ denotes the number of maps on a genus~$g$ orientable surface 
with $n$ cells marked by $1,\dots,n$ of sizes $i_1,\dots,i_n$ and with a choice of a vertex at the boundary of each cell.
We call ${\rm Map}_{g}(i_1,\dots,i_n)$ an {\it $n$-point connected GUE correlator of genus~$g$}.
Note that $\langle \tr M^{i_1} \dots \tr M^{i_n} \rangle_c$ vanishes unless $|i|$ is even.
The relation to Toda/Volterra integrability will be reviewed in Section~\ref{section2}.

For fixed $g,n$, Okounkov~\cite{O00} considered 
 the asymptotics for ${\rm Map}_{g}(i_1,\dots,i_n)$ when $i_1,\dots,i_n$ are all large 
in the way that 
\beq\label{toinf}
i_a \sim \kappa x_a, \quad \kappa \to \infty.
\eeq
For fixed $g,n\ge0$ satisfying $2g-2+n>0$, 
by comparing with the Kontsevich main identity~\eqref{Kontsmi}, 
 the following remarkable formula was proved in~\cite{O00} (cf.~\cite{OP09}):
\beq\label{Ok}
 \frac{2^{2g-3+\frac{3n}2} \pi^{\frac{n}2}}{\sqrt{x_1\cdots x_n}} \frac{{\rm Map}_g(i_1,\dots,i_n)}{2^{|i|}\kappa^{3g-3+\frac{3n}{2}}} \to Q_g(x_1,\dots,x_n), 
 \quad \mbox{as \eqref{toinf}, for $|i|$ being even},
\eeq
where 
$Q_g(x_1,\dots,x_n)$ denotes the Witten $n$-point function in genus~$g$, i.e.,
\beq
Q_g(x_1,\dots,x_n) := \sum_{d_1,\dots,d_n\ge0} \langle \tau_{d_1} \dots \tau_{d_n} \rangle_g \, x_1^{d_1} \cdots x_n^{d_n}. 
\eeq
By using~\eqref{Ok} and Toda integrability we will give a new proof of Witten's conjecture.

\smallskip 

\noindent {\bf Organization of the paper.}  In Section~\ref{section2} we review Toda lattice and the GUE partition function. 
In Section~\ref{section3} we give a short proof of Witten's conjecture based on~\eqref{Ok}.

\smallskip

\noindent {\bf Acknowledgements.}
I would like to thank Alessandro Giacchetto for stimulating lectures given 
 at MATRIX, Australia. Part of the work was done during my visit at ICTP, Italy; I thank ICTP for 
 warm hospitality. I also thank Don Zagier for several helpful suggestions that improve a lot the 
 presentation of the paper. The work is supported by NSFC 12371254 and CAS YSBR-032.

\section{Review of Toda lattice hierarchy and GUE partition function}\label{section2}
In this section we review the Toda lattice hierarchy and its tau-functions 
by means of the matrix-resolvent method (cf.~\cite{BDY16, DY17, DYZ21, Y20}), and review the Toda integrability for the 
 partition function of GUE correlators.

\subsection{Tau-function for the Toda lattice hierarchy}
Let $\A$ be the polynomial ring:
\beq
\A := \ZZ [ v_0,w_0,v_{\pm 1},  w_{\pm 1}, v_{\pm 2}, w_{\pm 2}, \cdots ].
\eeq
Define the shift operator $\Lambda:\A \rightarrow \A$ as the linear operator satisfying 
\beq
\Lambda (v_i) = v_{i+1}, \quad  \Lambda (w_i) = w_{i+1} ,  
\quad \Lambda (f g) = \Lambda (f) \Lambda (g),\qquad \forall \, i\in \ZZ,\, f, g\in \A.
\eeq
For an operator of the form
$P = \sum_{m\in \ZZ} P_m \Lambda^m $,
with $P_m \in \A$,
denote $P_+:=\sum_{m\geq 0} P_m \Lambda^m$ and $P_-:=\sum_{m< 0} P_m \Lambda^m$.
Let 
$L:=\Lambda + v_0 + w_0  \Lambda^{-1}$, where $\Lambda^{-1}$ denotes the inverse operator of~$\Lambda$.
The {\it abstract Toda lattice hierarchy} is a sequence of derivations $D_i: \A \to \A$, $i\ge1$, defined by
\beq\label{todaderiv}
D_i (L) = [(L^i)_+,L]
\eeq
and by requiring $D_i$ commute with~$\Lambda$. The operator $L$ is called the {\it Lax operator}. 
It was proved in~\cite{F74, Mana74} that the derivations $D_i$, $i\ge1$, all commute.

The Lax operator $L$ can be written in the matrix form
\beq
\mathcal{L} =  \Lambda + U(\lambda) , \quad U(\lambda)= 
\begin{pmatrix} v_0 -\lambda & w_0 \\ -1 & 0 \end{pmatrix}.
\eeq
\begin{lemma}[\cite{DY17, Y20}] \label{lemmaone} There exists a unique $2\times 2$ matrix series 
\beq
R(\lambda)=
\begin{pmatrix} 1 & 0\\0 & 0\end{pmatrix}
+{\rm O}\bigl(\lambda^{-1}\bigr) \in {\rm Mat} \bigl(2, \mathcal{A}[[\lambda^{-1}]]\bigr)
\eeq
satisfying the equation
\beq\label{eqres}
\Lambda(R(\lambda)) U(\lambda)-U(\lambda) R(\lambda)=0
\eeq
along with the normalization conditions
\beq\label{normres}
\tr \,R(\lambda)=1, \quad \det R(\lambda)=0.
\eeq
\end{lemma}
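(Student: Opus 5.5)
Write $R(\lambda)=\sum_{k\ge0}R_k\lambda^{-k}$ with $R_0=\begin{pmatrix}1&0\\0&0\end{pmatrix}$, and split $U(\lambda)=U_0-\lambda E_{11}$, where $U_0=\begin{pmatrix}v_0&w_0\\-1&0\end{pmatrix}$ and $E_{11}=R_0$. The plan is to solve \eqref{eqres} order by order in $\lambda^{-1}$, using \eqref{normres} to fix the ambiguity at each order. This is the matrix-resolvent analogue of the Gelfand--Dickey recursion.

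The coefficient of $\lambda^{-k}$ in \eqref{eqres} reads
\beq\nn
E_{11}R_{k+1}-\Lambda(R_{k+1})E_{11}=\Lambda(R_k)U_0-U_0R_k .
\eeq
Writing $R_k=\begin{pmatrix}a_k&b_k\\c_k&d_k\end{pmatrix}$, the left side equals $\begin{pmatrix}a_{k+1}-\Lambda(a_{k+1})&b_{k+1}\\-\Lambda(c_{k+1})&0\end{pmatrix}$. Hence the $(2,2)$ entry of the right side gives a compatibility condition on $R_k$. The $(1,2)$ and $(2,1)$ entries determine $b_{k+1}$ and $c_{k+1}$ explicitly, using that $\Lambda$ is invertible on $\A$. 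The $(1,1)$ entry gives only $(1-\Lambda)a_{k+1}$. To recover $a_{k+1}$ I use the normalization: $\tr R=1$ gives $d_{k+1}=-a_{k+1}$ for $k\ge0$. Expanding $\det R=0$ at order $\lambda^{-k-1}$ gives $a_{k+1}d_0+a_0d_{k+1}+(\text{terms in }R_1,\dots,R_k)=0$. Since $a_0=1$ and $d_0=0$, this reads $-a_{k+1}=-\sum_{i=1}^{k}(a_id_{k+1-i}-b_ic_{k+1-i})$. Here the $i=k$ term involves $d_1,b_1,c_1$ and $b_k,c_k$, so everything is determined by lower orders together with $b_{k+1},c_{k+1}$, and $a_{k+1}$ is thereby fixed polynomially. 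Uniqueness follows by induction, since each $R_{k+1}$ is forced. Integrality in $\A$ also holds, because the recursion involves no division beyond $\Lambda^{-1}$.

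For existence, $R_{k+1}$ constructed this way must also satisfy the $(1,1)$ equation $a_{k+1}-\Lambda(a_{k+1})=[\Lambda(R_k)U_0-U_0R_k]_{11}$ and the $(2,2)$ compatibility at the next order. This is the main obstacle. I would not verify it coefficientwise. Instead I would argue structurally: the map $R\mapsto \Lambda(R)U-UR$ preserves the relevant invariants. Concretely, if $R$ satisfies \eqref{eqres} exactly, then $U^{-1}\Lambda(R)U=R$, so $\tr R$ and $\det R$ are $\Lambda$-invariant, hence constant (in $\ZZ[[\lambda^{-1}]]$). Conversely, I would use a fundamental-solution construction. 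Take a formal solution $\Psi$ of $\Lambda(\Psi)=U\Psi$ of WKB form $\Psi=(I+O(\lambda^{-1}))\,\mathrm{diag}(\lambda^{n},\lambda^{-n})$-type, i.e.\ a formal diagonalization $U$-gauge $\Lambda(Y)^{-1}UY=$ diagonal with leading term $-\lambda E_{11}+\dots$. Then set $R:=Y E_{11}Y^{-1}$. This automatically satisfies \eqref{eqres} (since diagonal matrices commute with $E_{11}$) and $\tr R=1$, $\det R=0$, with leading term $E_{11}$. The formal diagonalization exists because the leading part $-\lambda E_{11}+U_0$ has eigenvalues of different orders ($\sim-\lambda$ and $\sim w_0/\lambda$), which is the standard splitting lemma. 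I would carry out this diagonalization order by order, solving at each step a linear equation whose off-diagonal part is invertible thanks to the $\lambda$-separation.

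Finally, the uniqueness shown above identifies this $R$ with the recursively defined one, so its coefficients lie in $\A$.
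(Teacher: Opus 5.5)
Your proof is correct. The paper gives no proof of this lemma (it is quoted from [DY17, Y20]), so the natural comparison is with the direct recursive argument.

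\textbf{Uniqueness.} Your uniqueness step is exactly that recursion. There are two small slips.
\begin{itemize}
\item The right-hand side of the order-$\lambda^{-k}$ equation should be $U_0R_k-\Lambda(R_k)U_0$, not its negative.
\item $a_{k+1}$ is determined by $R_1,\dots,R_k$ alone, because $b_0=c_0=0$; $b_{k+1},c_{k+1}$ do not enter.
\end{itemize}

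\textbf{Existence via diagonalization.} This is where you genuinely differ. Your diagonalization works, but not via a textbook splitting lemma. The leading coefficient $-E_{11}$ is singular, and the second eigenvalue of $U$ is $\approx -w_0/\lambda$, not $w_0/\lambda$; the $\mathrm{diag}(\lambda^n,\lambda^{-n})$ heuristic should be dropped. What you actually need is the following.
\begin{itemize}
\item Write $Y=I+\sum_{k\ge1}Y_k\lambda^{-k}$ and $D=-\lambda E_{11}+\sum_{k\ge0}D_k\lambda^{-k}$. The order-$\lambda^{-k}$ part of $\Lambda(Y)D=UY$ is $E_{11}Y_{k+1}-\Lambda(Y_{k+1})E_{11}+D_k=U_0Y_k-\sum_{i=1}^{k}\Lambda(Y_i)D_{k-i}$.
\item The operator on the left sends the off-diagonal entries $(y_{12},y_{21})$ to $(y_{12},-\Lambda(y_{21}))$, which is invertible on $\A$. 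It acts on the diagonal by $(1-\Lambda)$ and $0$, so you may set $\mathrm{diag}(Y_{k+1})=0$ and absorb the diagonal of the right side into $D_k$.
\item Hence $Y\in I+\lambda^{-1}{\rm Mat}(2,\A[[\lambda^{-1}]])$ and $\det Y$ is a unit. So $R=YE_{11}Y^{-1}$ has entries in $\A$ directly, and your final appeal to uniqueness for integrality is unnecessary.
\end{itemize}

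\textbf{Existence inside the recursion.} The obstacle you identify can also be removed without leaving the recursion.
\begin{itemize}
\item The $(2,2)$ compatibility at order $k+1$ is no obstacle. Using the $(2,2)$ relation at order $k$, it reduces to $w_0(1-\Lambda)(a_k+d_k)=0$, which is just the trace condition.
\item For the $(1,1)$ entry, suppose $b=-w_0\Lambda(c)$ and $d=1-a=\Lambda(a)+(v_0-\lambda)\Lambda(c)$; these follow from the other three entries and $\tr R=1$. A two-line computation then gives $(\Lambda-1)(\det R)=\Lambda(c)\,[\Lambda(R)U-UR]_{11}$. This refines your remark that $\tr R$ and $\det R$ are $\Lambda$-invariant for exact solutions, but uses only three of the four entries.
\item The recursion imposes $\det R=0$, and $\Lambda(c)=\lambda^{-1}+O(\lambda^{-2})$ is not a zero divisor. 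Hence the $(1,1)$ entry vanishes.
\end{itemize}

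\textbf{What each route buys.} The direct route is shorter and yields explicit recursive formulas. Yours explains structurally why $R$ is a rank-one idempotent, and hence why $\tr R=1$, $\det R=0$ is the natural normalization. It also carries over to Lax matrices whose leading coefficient has distinct eigenvalues.
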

The unique series $R(\lambda)$ in the above lemma is called the {\it basic matrix resolvent}.

If we think of~$v_0$, $w_0$ as two functions $v=v(x;\e)$, $w=w(x;\e)$, the operator $\Lambda$ as $e^{\e\p_x}$, and 
$v_m$, $w_m$ as $v(x+m\e;\e)$, $w(x+m\e;\e)$, with $\e$ being a parameter, then 
the derivations~$D_i$ lead to a hierarchy of 
differential-difference equations:
\begin{align}
& \e\frac{\p v}{\p s_i} =  D_i(v_0) , \quad 
\e\frac{\p w}{\p s_i} = D_i(w_0) , \qquad i\ge1, \label{Todauw} 
\end{align}
which are the {\it Toda lattice hierarchy}.  
The $i=1$ equation reads
\begin{align}
& \e\frac{\p v}{\p s_1} = (\Lambda -1) (w) , \quad \e\frac{\p w}{\p s_1}  = w \cdot (v-\Lambda^{-1}(v)) , \label{Todaequw} 
\end{align}
which is the Toda lattice equation. The second flow of~\eqref{Todauw} reads
\begin{align}
& \e\frac{\p v}{\p s_2} = (\Lambda -1) \bigl( w \cdot (v+\Lambda^{-1}(v))\bigr) , \label{Todauw2-v} \\
& \e\frac{\p w}{\p s_2}  =  w \cdot (\Lambda(w) -\Lambda^{-1}(w) + v^2- (\Lambda^{-1}(v))^2). \label{Todauw2-w} 
\end{align}

\begin{lemma}[\cite{DY17, Y20}]\label{lemmatwo} 
For an arbitrary solution $(v(x,{\bf s};\e), w(x,{\bf s};\e))$ to the Toda lattice hierarchy~\eqref{Todauw}, 
there exists a function $\tau(x,{\bf s};\e)$ such that
\begin{align}
&\e^2\sum_{i,j\ge1} \frac1{\lambda^{i+1}\mu^{j+1}}\frac{\partial^2\log \tau(x,{\bf s};\e)}{\partial s_{i} \partial s_{j}}
= \frac{\tr \, (R(\lambda;x,{\bf s};\e) R(\mu;x,{\bf s};\e))}{(\lambda-\mu)^2} - \frac1{(\lambda-\mu)^2},
\label{taun1}\\
& \e\sum_{i\ge1}  \frac1{\lambda^{i+1}} \frac{\p}{\p s_{i}} \biggl( \log \frac{\tau(x+\e,{\bf s};\e)}{\tau(x,{\bf s};\e)}\biggr) = (R(\lambda;x,{\bf s};\e))_{21}, 
\label{taun2} \\
&
\frac{\tau(x+\e,{\bf s};\e) \tau(x-\e,{\bf s};\e)}{\tau(x,{\bf s};\e)^2}=w(x,{\bf s};\e).
\label{taun3}
\end{align}
Here, $R(\lambda;x,{\bf s};\e)=R(\lambda)|_{v_m\mapsto v(x+m\e, {\bf s};\e), \, w_m\mapsto w(x+m\e,{\bf s};\e), \, m\in\ZZ}$.
\end{lemma}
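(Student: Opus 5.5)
The plan is to build $\log\tau$ from the generating series on the right-hand sides and then show the three identities~\eqref{taun1}--\eqref{taun3} are mutually compatible. Throughout I will use only three inputs: the defining equation~\eqref{eqres}, the normalization~\eqref{normres}, and the Lax equations~\eqref{todaderiv} rewritten through~$R$.

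\textbf{Step 1: express the flows through $R$.} For each $i\ge1$, set $V_i(\lambda):=\bigl(\lambda^i R(\lambda)\bigr)_+ + \bigl(\begin{smallmatrix}0&0\\0&1\end{smallmatrix}\bigr)$-type corrections. I will prove the zero-curvature form
\[
D_i(U(\lambda)) = \Lambda(V_i(\lambda))\,U(\lambda) - U(\lambda)\,V_i(\lambda).
\]
Combining this with~\eqref{eqres}, I will then derive the generating formula for the action of the flows on the resolvent:
\[
\sum_{i\ge1}\frac{D_i(R(\lambda))}{\mu^{i+1}} = \frac{[R(\mu),R(\lambda)]}{\lambda-\mu} + [\,\cdot\,,R(\lambda)]\ \text{terms}.
\]
The argument is that both sides satisfy the same linearized equation. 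Uniqueness in Lemma~\ref{lemmaone}, applied in linearized form, then pins the formula down, using that $R^2=R$ by~\eqref{normres}.

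\textbf{Step 2: define the two-point and one-point quantities.} Let
\[
\Omega_{ij}:=\operatorname{coef}\,\frac{\tr R(\lambda)R(\mu)-1}{(\lambda-\mu)^2},
\qquad
\Omega_i:=\operatorname{coef}\,R_{21}(\lambda).
\]
Three facts need to be checked:
\begin{itemize}
\item $\Omega_{ij}$ is a genuine polynomial in~$\mathcal A$, i.e.\ the numerator vanishes to second order at $\lambda=\mu$. This follows from $\tr R^2=\tr R=1$ and $\tr(R R')=\tfrac12(\tr R^2)'=0$.
\item $\Omega_{ij}$ is symmetric in $i,j$, which is clear.
\item $D_k\Omega_{ij}$ is totally symmetric in $i,j,k$. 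This follows from Step 1, because $D_k$ of $\tr R(\lambda)R(\mu)$ yields a cyclically symmetric expression $\tr\bigl(R(\nu)[R(\lambda),R(\mu)]\bigr)$ divided by products of differences.
\end{itemize}

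\textbf{Step 3: compatibility with the shift.} Similarly I need
\[
(\Lambda-1)\,\Omega_{ij} = D_i(\Omega_j)\quad\text{(appropriately normalized)},
\qquad
(\Lambda-1)\,\Omega_i = D_i\log w_0 \ \ \text{(up to } \Lambda^{-1}\text{)}.
\]
Both are obtained from~\eqref{eqres} written entrywise, which gives $\Lambda(R_{21})$ in terms of $R_{11}$ and $R_{12}/w_0$. The last relation I will verify directly from~\eqref{Todaequw}--type formulas for~$D_i(w_0)$.

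\textbf{Step 4: integrate.} With these closedness conditions, I construct $\tau$ as follows. First choose $\log\tau(x,\mathbf s)-\log\tau(x-\e,\mathbf s)$ as a discrete primitive satisfying~\eqref{taun3}; this is solvable since $\log w$ is given. Then integrate the closed 1-form $\sum_i \Omega_i\,ds_i$ in $\mathbf s$ to get~\eqref{taun2}, and the closed system $\partial_{s_k}\Omega_{ij}$ to get~\eqref{taun1}. The remaining freedom in $\tau$ is the gauge $\tau\mapsto e^{a+bx+\sum c_is_i}\tau$, which doesn't spoil the identities.

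I expect Step 1 to be the main obstacle: deriving the resolvent evolution formula and the matching of $D_k\Omega_{ij}$ with $(\Lambda-1)$ applied to two-point data. I would attack it first by the uniqueness argument, i.e.\ showing the difference of the two sides solves the homogeneous linearized version of~\eqref{eqres}--\eqref{normres} and is $O(\lambda^{-1})$, hence zero.
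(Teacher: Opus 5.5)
The paper gives no proof of Lemma~\ref{lemmatwo}; it quotes it from \cite{DY17, Y20}. Your outline is the matrix-resolvent argument used in those references: zero-curvature form, $D_iR=[V_i,R]$ by uniqueness, two-point functions from $\tr R(\lambda)R(\mu)$, total symmetry of $D_k\Omega_{ij}$ via $\tr\bigl(R(\nu)[R(\lambda),R(\mu)]\bigr)$ divided by the Vandermonde, shift compatibility, and integration. Steps~2 and~4 and the uniqueness part of Step~1 are fine. Two points, however, need repair.

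First, the normalizations you defer are not cosmetic. As printed, \eqref{taun2} is false, and so is your Step~3 relation. The $(2,1)$ and $(2,2)$ entries of \eqref{eqres} give $(\lambda-v_0)\Lambda(R_{21})=R_{11}+\Lambda(R_{11})-1$ and $R_{12}=-w_0\Lambda(R_{21})$. Hence $R_{21}=\lambda^{-1}+v_{-1}\lambda^{-2}+\cdots$, while $\Omega_{11}=w_0$. So the left side of \eqref{taun2} lacks the term $1/\lambda$. Moreover, with your $\Omega_j$ (the coefficients of $R_{21}$), the identity $(\Lambda-1)\Omega_{ij}=D_i(\Omega_j)$ already fails for $i=j=1$: $w_1-w_0\neq D_1(v_{-1})=w_0-w_{-1}$. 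The true relations are $(1-\Lambda^{-1})\Omega_{ij}=D_i(\Omega_j)$ and $w_0(\Lambda-1)\Omega_j=D_j(w_0)$. With these, Step~4 yields \eqref{taun1}, \eqref{taun3}, and $\frac1\lambda+\e\sum_{i\ge1}\lambda^{-i-1}\p_{s_i}\log\frac{\tau(x+\e)}{\tau(x)}=R_{21}(\lambda;x+\e)$. This corrected form is the one the paper actually uses in the proof of \eqref{idFeg1}, via $\e(\Lambda-1)\p_{s_2}\F^{\rm G}=(v^{\rm G})^2+(\Lambda+1)(w^{\rm G})$.

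Second, the substantive part of Step~1 is missing. The zero-curvature identity is the only place where the definition \eqref{todaderiv} of $D_i$ enters, and you specify neither $V_i$ nor how to prove the identity. The correct choice is $V_i=(\lambda^iR)_++\operatorname{diag}(0,\gamma_i)$ with $\gamma_i=[\lambda^{-i-1}]R_{21}$. By \eqref{eqres}, $\Lambda(V_i)U-UV_i$ is then independent of $\lambda$ and equals $\bigl(\begin{smallmatrix}(\Lambda-1)\alpha_i & w_0(\Lambda-1)\gamma_i\\ 0 & 0\end{smallmatrix}\bigr)$, where $\alpha_i=[\lambda^{-i-1}]R_{11}$. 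The scalar flow $[(L^i)_+,L]$ gives $D_i(v_0)=(\Lambda-1)(L^i)_{-1}$ and $D_i(w_0)=w_0(1-\Lambda^{-1})(L^i)_{0}$, where the subscripts denote coefficients of $\Lambda^{-1}$ and $\Lambda^0$. Equating the two amounts to $\alpha_i=(L^i)_{-1}$ and $\Lambda(\gamma_i)=(L^i)_0$. That is, $R_{11}-1$ and $\Lambda(R_{21})$ must be the $\Lambda^{-1}$- and $\Lambda^0$-coefficients of the resolvent $(\lambda-L)^{-1}=\sum_{k\ge0}\lambda^{-k-1}L^k$. This dictionary between the scalar Lax operator and the matrix resolvent is the missing idea, and it must be proved; one route is to check that these resolvent series satisfy \eqref{eqres}--\eqref{normres} and then invoke Lemma~\ref{lemmaone}. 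Your uniqueness argument only covers the subsequent, easier step $D_iR=[V_i,R]$.

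A minor slip: with $D_iR=[V_i,R]$, the generating formula is $\sum_{i\ge1}\mu^{-i-1}D_iR(\lambda)=\frac{[R(\mu),R(\lambda)]}{\mu-\lambda}+[\operatorname{diag}(0,R_{21}(\mu)),R(\lambda)]$, which has the opposite sign to yours. It is the $\lambda$-independence of the correction term that makes it drop out of $D_k\tr R(\lambda)R(\mu)$.
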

The function $\tau(x,{\bf s};\e)$ in the above lemma is called 
a {\it tau-function of the solution $(v(x,{\bf s};\e), w(x,{\bf s};\e))$} 
to the Toda lattice hierarchy \cite{DY17}.

\subsection{The GUE partition function}
Like in e.g.~\cite{DY17}, it is convenient to introduce the 't Hooft coupling constant $x=N\e$ and 
define the {\it GUE free energy $\F^{\rm G}(x,{\bf s};\e)$} as follows:
\begin{align}
 \F^{\rm G}(x,{\bf s};\e) = & \sum_{g\ge0}
 \sum_{n\geq 1} \sum_{\substack{i_1,\dots,i_n \geq 1 \\ 2 - 2g - n + \frac{|i|}2\ge1}} \frac{{\rm Map}_g(i_1,\dots,i_n)}{n!} 
 s_{i_1} \cdots s_{i_n} \e^{2g-2} x^{2-2g - n + \frac{|i|}2} \label{GUEfree}\\
&+\frac{x^2}{2\e^2}\Bigl(\log x-\frac32\Bigr) 
- \frac{\log x}{12} + \zeta'(-1) + \sum_{g\geq2} \frac{\e^{2g-2} B_{2g}}{4g(g-1)x^{2g-2}}, \nn
\end{align}
where $\zeta(s)$ denotes the Riemann zeta function and $B_m$ the $m$th Bernoulli number.
We call the exponential $\exp \F^{\rm G}(x,{\bf s};\e)=:Z^{\rm G}(x,{\bf s};\e)$ the {\it partition function of GUE 
correlators} or the {\it GUE partition function}, which is well known to 
satisfy the following equations:
\begin{align}
& \sum_{i\geq1} i \bigl(s_i-\frac12\delta_{i,2}\bigr) \frac{\p Z^{\rm G}(x,{\bf s};\e)}{\p s_{i-1}} + \frac{xs_1}{\e^2} Z^{\rm G}(x,{\bf s};\e) = 0, \label{stringG} \\
& \sum_{i\geq1} i \bigl(s_i-\frac12\delta_{i,2}\bigr) \frac{\p Z^{\rm G}(x,{\bf s};\e)}{\p s_{i}} + \frac{x^2}{\e^2} Z^{\rm G}(x,{\bf s};\e) = 0. \label{scalingG}
\end{align}
Equation~\eqref{stringG} is called the {\it string equation} for $Z^{\rm G}(x,{\bf s};\e)$.
Equation~\eqref{scalingG} implies that 
\beq\label{dilationG}
{\rm Map}_{g}(2,i_1,\dots,i_n) = |i| \cdot {\rm Map}_{g}(i_1,\dots,i_n) + \delta_{g,0} \delta_{n,0}.
\eeq

Define two power series $v^{\rm G}(x,{\bf s};\e)$ and $w^{\rm G}(x,{\bf s};\e)$ of~${\bf s}$ by
\begin{align}
&v^{\rm G}(x,{\bf s};\e) = \e (\Lambda-1) \frac{\p \F^{\rm G}(x,{\bf s};\e)}{\p s_1},  \label{defVWintro1}\\
&w^{\rm G}(x,{\bf s};\e) 
= e^{\F^{\rm G}(x+\e, \, {\bf s}; \, \e) + \F^{\rm G}(x-\e, \, {\bf s}; \, \e) - 2 \F^{\rm G}(x, \, {\bf s}; \, \e)}. \label{defVWintro2}
\end{align}
It is well known (see e.g. \cite{AvM95, DY17, GMMMO91, M94}) that 
$(v^{\rm G}(x,{\bf s}; \e), w^{\rm G}(x,{\bf s};\e))$ is a solution to the Toda lattice hierarchy~\eqref{Todauw}
 and that $Z^{\rm G}(x,{\bf s};\e)$ 
is a tau-function of this solution. 
We recall that (see e.g.~\cite{DY17}) the functions 
$v^{\rm G}(x,{\bf s}; \e), w^{\rm G}(x,{\bf s};\e)$ can be uniquely determined by 
the Toda lattice hierarchy~\eqref{Todauw} together with the initial data
\beq\label{inigue}
v^{\rm G}(x, {\bf 0}; \e) = 0, \quad w^{\rm G}(x,{\bf 0};\e)=x.
\eeq

Recall that the {\it even GUE free energy} $\F^{\rm eG}(x,{\bf s}_{\rm even};\e)$ is defined by 
\beq
\F^{\rm eG}(x,{\bf s}_{\rm even};\e)=\F^{\rm G}(x,{\bf s};\e)|_{{\bf s}_{\rm odd}={\bf 0}},
\eeq
where ${\bf s}_{\rm even}=(s_2,s_4,s_6,\dots)$, ${\bf s}_{\rm odd}=(s_1,s_3,s_5,\dots)$. The exponential 
$e^{\F^{\rm eG}(x,{\bf s}_{\rm even};\e)}=:Z^{\rm eG}(x,{\bf s}_{\rm even};\e)$ is called the 
{\it even GUE partition function}, which plays an important role in quantum gravity \cite{GMMMO91, Wi91}. 
By definition we know that  
\beq\label{vvanishi}
v^{\rm G}(x,{\bf s}; \e)|_{{\bf s}_{\rm odd}={\bf 0}} \equiv 0.
\eeq
Denote $w^{\rm eG}(x,{\bf s}_{\rm even};\e)=w^{\rm G}(x,{\bf s}; \e)|_{{\bf s}_{\rm odd}={\bf 0}}$, 
which by definition (see~\eqref{defVWintro2})  satisfies 
\beq\label{weGFeG}
w^{\rm eG}(x,{\bf s}_{\rm even};\e) = 
e^{\F^{\rm eG}(x+\e, \, {\bf s}_{\rm even}; \, \e) + \F^{\rm eG}(x-\e, \, {\bf s}_{\rm even}; \, \e) - 2 \F^{\rm eG}(x, \, {\bf s}_{\rm even}; \, \e)}.
\eeq
Putting ${\bf s}_{\rm odd}={\bf 0}$ on both sides of~\eqref{Todauw2-w} and using~\eqref{vvanishi}, one finds that 
$w^{\rm eG}(x,{\bf s}_{\rm even};\e)$ satisfies 
the Volterra lattice equation (also called the discrete or difference KdV equation):
\beq\label{Volterraeq}
\e\frac{\p w^{\rm eG}(x,{\bf s}_{\rm even};\e)}{\p s_{2}} = 
w^{\rm eG}(x,{\bf s}_{\rm even};\e)(w^{\rm eG}(x+\e,{\bf s}_{\rm even};\e) - w^{\rm eG}(x-\e,{\bf s}_{\rm even};\e)).
\eeq
This was known in~\cite{DY17-2, GMMMO91, Wi91}. Similarly, it is clear (see e.g.~\cite{DLYZ20, DY17-2, GMMMO91, Wi91}) that 
$w^{\rm eG}(x,{\bf s}_{\rm even};\e)$ satisfies
\beq\label{Vlattice}
\e\frac{\p w^{\rm eG}(x,{\bf s}_{\rm even}; \e)}{\p s_{2j}} = [(L_{\rm e}^{2j})_+,L_{\rm e}],\quad j\ge1,
\eeq
which is the {\it Volterra lattice hierarchy}, {\it aka} the {\it discrete KdV hierarchy}. 
Here, the Lax operator has the expression $L_{\rm e}=\Lambda+w^{\rm eG}(x,{\bf s}_{\rm even};\e)\Lambda^{-1}$. 

\section{A new proof of Witten's conjecture}\label{section3}
In this section we give a new proof of Witten's conjecture. 

Let us do some preparations. In Section~\ref{section1}, for $2g-2+n>0$
the Witten $n$-point function in genus~$g$, i.e., $Q_g(x_1,\dots,x_n)$, 
is defined. For the cases with $2g-2+n\le0$, like in e.g.~\cite{AIS21, LX11-1, O02}, it is convenient to 
extend the definition of $Q_g(x_1,\dots,x_n)$ as follows:
\begin{align}\label{exddefBg}
Q_{0}(\emptyset)=Q_{1}(\emptyset)=0, \quad Q_{0}(x_1) = \frac1{x_1^2}, \quad Q_{0,2}(x_1,x_2)=\frac1{x_1+x_2}.
\end{align}
Then, as shown by Liu--Xu~\cite{LX11-1}, equation~\eqref{stringeqint} implies that for any $g\ge0$ and $n\ge1$,
\beq
Q_g(x_1,\dots,x_n,x_{n+1}=0) = (x_1+\dots+x_{n}) Q_g(x_1,\dots,x_{n}).
\eeq
By induction we know that for $g\ge0$, $n\ge1$, $I=\{1,\dots,n\}$ and for any $s\ge0$,
\begin{align}\label{stringQ}
&Q_g(x_I,x_{n+1}=0,\dots,x_{n+s}=0) = |x_I|^s Q_g(x_I).
\end{align}
Here and below, $x_I=(x_i)_{i\in I}$ and $|x_I|=\sum_{i\in I} x_i$.

Under the extended definition of $Q_g(x_1,\dots,x_n)$ we also have the following lemma.
\begin{lemma}
Formula~\eqref{Ok} also holds for $(g,n)=(0,1)$ and $(g,n)=(0,2)$.
\end{lemma}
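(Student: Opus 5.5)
We verify both cases directly, using the explicit genus-zero one- and two-point map counts and Stirling's formula.

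\emph{Case $(g,n)=(0,1)$.} Here ${\rm Map}_0(i)$ with $i=2k$ counts planar maps with one cell of size $2k$ and a marked boundary vertex. These are gluings of a $2k$-gon into a sphere, i.e.\ plane trees with $k$ edges, so ${\rm Map}_0(2k)=C_k=\frac{(2k)!}{k!(k+1)!}$. The left side of~\eqref{Ok} becomes
\beq
\frac{2^{-3/2}\pi^{1/2}}{\sqrt{x_1}}\,\frac{C_k}{2^{2k}\kappa^{-3/2}},
\eeq
where $2k\sim\kappa x_1$. Stirling gives $C_k\sim 4^k/(\sqrt{\pi}k^{3/2})$, so the expression tends to
\beq
\frac{2^{-3/2}\kappa^{3/2}}{\sqrt{x_1}\,(\kappa x_1/2)^{3/2}}=\frac1{x_1^2}=Q_0(x_1).
\eeq

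\emph{Case $(g,n)=(0,2)$.} We need the count ${\rm Map}_0(i_1,i_2)$ of planar maps with two marked cells and rooted boundaries. This is the genus-zero part of $\langle \tr M^{i_1}\tr M^{i_2}\rangle_c$, and a classical closed formula exists for it. For $i_1=2k_1$, $i_2=2k_2$ both even,
\beq
{\rm Map}_0(2k_1,2k_2)=\frac{2k_1k_2}{k_1+k_2}\binom{2k_1}{k_1}\binom{2k_2}{k_2}.
\eeq
For $i_1=2k_1+1$, $i_2=2k_2+1$ both odd,
\beq
{\rm Map}_0(2k_1+1,2k_2+1)=\frac{(2k_1+1)(2k_2+1)}{k_1+k_2+1}\binom{2k_1}{k_1}\binom{2k_2}{k_2}.
\eeq
I would derive these either by citing the known formula for the planar two-boundary generating function, or from the planar resolvent correlator $W_{0,2}$ of the Wigner semicircle. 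The main obstacle is stating them correctly, with the right normalization for rooting.

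Assuming the formula, consider the even case. Stirling gives $\binom{2k}{k}\sim 4^k/\sqrt{\pi k}$ and $k_a\sim\kappa x_a/2$, so
\beq
{\rm Map}_0\sim \frac{2k_1k_2}{k_1+k_2}\cdot\frac{4^{k_1+k_2}}{\pi\sqrt{k_1k_2}} =\frac{2^{|i|}\kappa\sqrt{x_1x_2}}{2\pi(x_1+x_2)}.
\eeq
Multiplying by the prefactor $2^{-3+3}\pi/(\sqrt{x_1x_2}\,\kappa^{0})$ divided by $2^{|i|}$ leaves $\kappa/(2(x_1+x_2))$.

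This exposes a mismatch in powers of $\kappa$: the exponent $3g-3+3n/2$ gives $\kappa^0$ for $(0,2)$, yet the limit appears to grow like $\kappa$. So I would first recheck the normalization, since the $\kappa$ power depends on the edge count. Rechecking the formula against small cases, ${\rm Map}_0(1,1)=1$ and ${\rm Map}_0(2,2)=2$ hold, but the latter should be compared with $\langle \tr M^2\tr M^2\rangle_c=2N^2$. A careful rederivation is therefore essential: the correct count may carry $\frac{k_1k_2}{k_1+k_2}$ without the extra growth, giving $\kappa^0$ behaviour and the limit $1/(x_1+x_2)$.

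The odd case is handled identically: $(2k_1+1)(2k_2+1)/(k_1+k_2+1)$ is asymptotically $2\cdot 2k_1k_2/(k_1+k_2)$, and the parity factor is absorbed as in the even case. I expect the main work to be pinning down the exact two-point planar formula and its scaling so that the limit equals $Q_{0}(x_1,x_2)=1/(x_1+x_2)$.
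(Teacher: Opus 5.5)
Your $(0,1)$ case is correct and is the paper's argument (Catalan numbers plus Stirling). The $(0,2)$ case, however, is not proved. The proposal ends in doubt, and that doubt comes from two concrete errors rather than from any genuine normalization problem.

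First, your even-parity formula is off by a factor of $2$. The correct Harer--Zagier count, which is what the paper uses, is
\[
{\rm Map}_0(2k_1,2k_2)=\frac{k_1k_2}{k_1+k_2}\binom{2k_1}{k_1}\binom{2k_2}{k_2}.
\]
At $k_1=k_2=1$ this gives $2$, matching $\langle \tr M^2\,\tr M^2\rangle_c=2N^2$, whereas your version gives $4$. More generally it agrees with \eqref{dilationG}, which forces ${\rm Map}_0(2,2k)=2k\,{\rm Map}_0(2k)=\frac{2k}{k+1}\binom{2k}{k}$. Your odd-parity formula is correct: it is the paper's $\binom{2j_1-1}{j_1}\binom{2j_2-1}{j_2}\frac{j_1j_2}{j_1+j_2-1}$ after substituting $j_a=k_a+1$ and using $(k+1)\binom{2k+1}{k+1}=(2k+1)\binom{2k}{k}$.

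Second, the ``mismatch in powers of $\kappa$'' is an algebra slip. The expression $\frac{k_1k_2}{k_1+k_2}\cdot\frac{1}{\sqrt{k_1k_2}}=\frac{\sqrt{k_1k_2}}{k_1+k_2}$ is homogeneous of degree $0$ in $(k_1,k_2)$, so no power of $\kappa$ survives, consistent with $3g-3+\frac{3n}{2}=0$. Dropping a constant factor $2$, as you suggest, could not change a growth rate anyway.

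With the correct formula the case closes immediately. Using $\binom{2k}{k}\sim 4^k/\sqrt{\pi k}$ and $k_a\sim \kappa x_a/2$,
\[
\frac{\pi}{\sqrt{x_1x_2}}\,\frac{{\rm Map}_0(2k_1,2k_2)}{2^{2k_1+2k_2}}\sim\frac{\pi}{\sqrt{x_1x_2}}\cdot\frac{\sqrt{k_1k_2}}{\pi\,(k_1+k_2)}\longrightarrow\frac{1}{x_1+x_2}=Q_0(x_1,x_2).
\]
In the odd case, $(2k_1+1)(2k_2+1)/(k_1+k_2+1)\sim 4k_1k_2/(k_1+k_2)$, and the extra factor is exactly compensated by $2^{|i|}=4^{k_1+k_2+1}$, giving the same limit. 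With your even formula the two parities would have produced different limits, $2/(x_1+x_2)$ versus $1/(x_1+x_2)$, so ``handled identically'' was not actually consistent.

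In short, your route is the paper's route, but as written the $(0,2)$ case lacks both its key input (the correct closed formula) and its conclusion.
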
 
\begin{proof}
For $g=0$ and $n=1$, it is well known (see e.g.~\cite{HZ86}) that 
\beq
{\rm Map}_0(2j_1) = \frac1{j_1+1}\binom{2j_1}{j_1},\quad \forall\, j_1\ge1,
\eeq
which by Stirling's formula implies that $\frac{2^{-\frac32}\pi^{\frac12}}{\sqrt{x_1}}\frac{{\rm Map}_0(2j_1)}{2^{2j_1} \kappa^{-\frac{3}2}}$
tends to $\frac1{x_1^2} =  Q_0(x_1)$
as~\eqref{toinf}.

For $g=0$ and $n=2$, it is known from~\cite{HZ86} that 
for any $j_1, j_2\ge1$,
\beq\label{g0n2GUE}
{\rm Map}_0(2j_1,2j_2) = \tbinom{2j_1}{j_1}\tbinom{2j_2}{j_2} \tfrac{j_1 j_2}{j_1+j_2},\quad 
{\rm Map}_0(2j_1-1,2j_2-1) = \tbinom{2j_1-1}{j_1} \tbinom{2j_2-1}{j_2} \tfrac{j_1 j_2}{j_1+j_2-1}.
\eeq
Applying Stirling's formula we see that both 
$\frac{\pi}{\sqrt{x_1 x_2}} \frac{{\rm Map}_0(2j_1,2j_2)}{2^{2j_1+2j_2}}$ and 
$\frac{\pi}{\sqrt{x_1 x_2}} \frac{{\rm Map}_0(2j_1-1,2j_2-1)}{2^{2j_1+2j_2-2}}$ 
tend to $\frac1{x_1+x_2} = Q_0(x_1,x_2)$ 
as~\eqref{toinf}.
The lemma is proved.
\end{proof}
We note that for $(g,n)=(0,0)$, $(1,0)$, 
formula~\eqref{Ok} still holds with the extended definition~\eqref{exddefBg}, but trivially.

The following lemma gives an equivalent description of  
Witten's KdV equation.
\begin{lemma}
The $d=1$ case of Witten's conjecture is equivalent to validity of the following relations: for all $g\ge0$ and $n\ge1$, setting $I=\{1,\dots,n\}$, 
\beq\label{QgrecursionKdV}
(2g+n-1)  |x_I|^2 Q_g(x_I) = \frac{|x_I|^5}{12} Q_{g-1}(x_I) +
\sum_{\substack{g_1,  \, g_2\ge0 \\ g_1+g_2=g}} 
\sum_{\substack{A\neq \emptyset, B\neq \emptyset \\ A\sqcup B=I}} |x_{A}|^2 |x_{B}|^3 Q_{g_1}(x_A) Q_{g_2}(x_B).
\eeq
\end{lemma}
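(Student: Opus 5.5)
The plan is to integrate the KdV equation once in $t_0$ and translate that equation, coefficient by coefficient, into generating functions. The string equation~\eqref{stringQ} identifies the $\tau_0$ insertions with the zero arguments of $Q_g$, and the dilaton equation~\eqref{dilatoneqint} turns the $\tau_1$ insertion into the factor $(2g+n-1)$.

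\emph{Step 1 (integrated form).} Let $u=\p^2\F/\p t_0^2$. Consider
\[
E:=\frac{\p^2\F}{\p t_0\p t_1}-\frac12\Bigl(\frac{\p^2\F}{\p t_0^2}\Bigr)^2-\frac1{12}\frac{\p^4\F}{\p t_0^4}.
\]
The $d=1$ case of~\eqref{KdVhie} is exactly $\p_{t_0}E=0$.

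I claim $\p_{t_0}E=0$ is equivalent to $E=0$. One direction is trivial. For the other, $\p_{t_0}E=0$ says that every coefficient of $E$ at a monomial containing $t_0$ vanishes. Step 3 shows that the coefficients of $E$ assemble into the differences ``LHS minus RHS'' of~\eqref{QgrecursionKdV}. Inserting an extra $t_0$ corresponds to setting an extra variable $x_{n+1}=0$. By~\eqref{stringQ}, this multiplies the whole relation by $|x_I|$. I will check this termwise: a split $A\sqcup B$ of $I\cup\{n+1\}$ puts the zero variable on one side, and $|x_{A\cup\{n+1\}}|=|x_A|$.

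Since $|x_I|$ is a nonzero polynomial, the relation for $I$ follows from the one for $I\cup\{n+1\}$. The constant term of $E$ (the case $n=0$) is checked by hand: $\langle\tau_0\tau_1\rangle=0$, and the unstable contributions cancel.

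\emph{Step 2 (coefficient extraction).} Take the coefficient of $t_{d_1}\cdots t_{d_n}/n!$ in $E=0$. This gives
\[
\langle\tau_0\tau_1\tau_{d_I}\rangle_g=\frac12\sum_{g_1+g_2=g}\ \sum_{A\sqcup B=I}\langle\tau_0\tau_0\tau_{d_A}\rangle_{g_1}\langle\tau_0\tau_0\tau_{d_B}\rangle_{g_2}+\frac1{12}\langle\tau_0^4\tau_{d_I}\rangle_{g-1}.
\]
Multiply by $x^{d}$ and sum. On the left, the dilaton equation gives $(2g-1+n)\langle\tau_0\tau_{d_I}\rangle_g$, hence $(2g+n-1)|x_I|Q_g(x_I)$ by~\eqref{stringQ} with $s=1$. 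On the right, \eqref{stringQ} with $s=2$ and $s=4$ gives $|x_A|^2Q_{g_1}(x_A)|x_B|^2Q_{g_2}(x_B)$ and $|x_I|^4Q_{g-1}(x_I)$.

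Terms with $A=\emptyset$ involve $\langle\tau_0\tau_0\rangle_{g_1}$, which is $0$ under the conventions. So only nonempty $A,B$ survive.

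\emph{Step 3 (symmetrization).} Multiply the relation by $|x_I|$ and use $|x_I|=|x_A|+|x_B|$. Swapping $A$ and $B$ together with $g_1$ and $g_2$ shows
\[
\frac12\sum|x_A|^2|x_B|^2(|x_A|+|x_B|)Q_{g_1}Q_{g_2}=\sum|x_A|^2|x_B|^3Q_{g_1}Q_{g_2}.
\]
This gives exactly~\eqref{QgrecursionKdV}. Every step is reversible: coefficients of monomials $\leftrightarrow$ polynomial identities in $x_I$, and multiplication by the nonzero polynomial $|x_I|$.

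\emph{Main obstacle.} The delicate point is bookkeeping of the unstable terms. The string and dilaton equations~\eqref{stringeqint}, \eqref{dilatoneqint} carry $\delta$-corrections, and $u$ contains the genus-zero term $t_0$ coming from $\langle\tau_0^3\rangle_0$. I must verify that the extended values~\eqref{exddefBg} of $Q_0(x_1)$ and $Q_0(x_1,x_2)$ reproduce exactly these contributions in the cases $(g,n)=(0,1),(0,2)$, and that $\langle\tau_0\tau_0\rangle_g$ and the $n=0$ constants vanish. I would treat $g=0$ with $n\le 2$ by direct check, and use~\eqref{stringQ} for everything else.
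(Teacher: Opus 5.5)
\textbf{Gap in Step 1.} Your Steps 2 and 3 are fine: your Step 2 relation, with $\tfrac12$ over ordered splits, is right, and multiplying by $|x_I|$ and symmetrizing gives exactly \eqref{QgrecursionKdV}. The gap is in Step 1, the passage from $\p_{t_0}E=0$ to $E=0$.

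Write $\Delta_g(x_J)$ for LHS minus RHS of your Step 2 relation on an index set $J$. The identity you need, $\Delta_g(x_I,x_{n+1}=0)=|x_I|\,\Delta_g(x_I)$ for $n\ge1$, is true. But it is not a termwise consequence of \eqref{stringQ}, and the check you describe would not close. Two things break:
\begin{itemize}
\item For $J=I\cup\{n+1\}$ the left side carries the dilaton factor $2g+(n+1)-1=2g+n$. So it becomes $(2g+n)|x_I|^2Q_g(x_I)$, which is not $|x_I|$ times $(2g+n-1)|x_I|Q_g(x_I)$.
\item The splits $A=\{n+1\}$ and $B=\{n+1\}$ have no counterpart among the splits of $I$. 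Since $x^2Q_0(x)=1$ (this is $\langle\tau_0^3\rangle_0=1$, i.e.\ the term $t_0$ in $u$) and $x^2Q_{g_1}(x)$ vanishes at $x=0$ for $g_1\ge1$, these two splits contribute $2\cdot\tfrac12|x_I|^2Q_g(x_I)$.
\end{itemize}
Your pairing argument via $|x_{A\cup\{n+1\}}|=|x_A|$ only covers splits in which $n+1$ shares its side with other indices. The relation scales by $|x_I|$ only because the extra $+1$ in the dilaton factor cancels the two singleton splits. That cancellation is the missing step. It is exactly the unstable-term bookkeeping you flagged in your last paragraph, but it sits here, not in checking the extended values of $Q_0$.

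\textbf{Comparison with the paper.} Once you carry out that computation, you have reproduced the paper's proof. The coefficient of $t_{d_I}/n!$ in $\p_{t_0}E$ is precisely \eqref{equivkdv0}. The paper applies the dilaton equation to get the factor $2g+n$, absorbs the $B=\emptyset$ term $\langle\tau_0^2\tau_{d_I}\rangle_g\langle\tau_0^3\rangle_0$ to lower it to $2g+n-1$, and reads off \eqref{QgrecursionKdV} directly. The product $\p_{t_0}^2\F\cdot\p_{t_0}^3\F$ already produces the asymmetric weight $|x_A|^2|x_B|^3$, so no symmetrization is needed. Working with the differentiated equation \eqref{KdVfree} thus avoids both the integration constant and the detour through the Liu--Xu form. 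What your route adds is the simultaneous equivalence with \eqref{bilinear1}.

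\textbf{Alternative fix if you keep $E$.} There is a cleaner proof of $\p_{t_0}E=0\Rightarrow E=0$. The string equation $\p_{t_0}\F=\tfrac{t_0^2}{2}+\sum_{d\ge0}t_{d+1}\p_{t_d}\F$ implies $(\p_{t_0}-\sum_{d\ge0}t_{d+1}\p_{t_d})E=0$. So $\p_{t_0}E=0$ gives $\sum_{d\ge1}t_{d+1}\p_{t_d}E=0$. This operator is injective on nonconstant polynomials in $t_1,t_2,\dots$: in a monomial with the largest index $M$, replacing one $t_M$ by $t_{M+1}$ gives a monomial no other term can produce. Hence $E$ is constant, and $E({\bf 0})=0$.
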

\begin{proof}
In terms of the Witten free energy~$\F({\bf t})$ (see~\eqref{Wittenfreeenergy}), the $d=1$ case of Witten's conjecture reads as follows:
\beq\label{KdVfree}
\frac{\p^3 \F({\bf t})}{\p t_1\p t_0^2} = \frac{\p^2 \F({\bf t})}{\p t_0^2}\frac{\p^3 \F({\bf t})}{\p t_0^3} 
+ \frac{1}{12} \frac{\p^5 \F({\bf t})}{\p t_0^5},
\eeq
which, by comparing coefficients of powers of~${\bf t}$ and under degree-dimension matching (or by dilaton equation), is equivalent to
the following relations: for all $g,n, d_1,\dots,d_n\ge0$,
\beq\label{equivkdv0}
\langle \tau_1\tau_0^2 \tau_{d_I}\rangle_g = \frac{1}{12} \langle \tau_0^5\tau_{d_I}\rangle_{g-1} + 
\sum_{\substack{g_1, \, g_2\ge0 \\ g_1+g_2=g}} \sum_{A\sqcup B=I} \langle \tau_0^2 \tau_{d_A}\rangle_{g_1}  \langle \tau_0^3 \tau_{d_B}\rangle_{g_2}
\eeq
(see also~\cite{FP00, LX11-1,Wi91}). Here, $\tau_{d_A}=\prod_{a \in A} \tau_{d_a}$.
For $n=0$, the equality~\eqref{equivkdv0} holds trivially. 
For $n\ge1$, noticing $\langle\tau_0^3\rangle_0=1$, by~\eqref{dilatoneqint} and~\eqref{stringQ} the equality~\eqref{equivkdv0} simplifies to~\eqref{QgrecursionKdV}.
The lemma is proved.
\end{proof}

\begin{remark}
The relation~\eqref{QgrecursionKdV} is similar to~\cite[(ii) of Proposition~2.1]{LX11-1} (see also~\cite{LX11-2}).
We note that there is another version of Witten's conjecture, proved in~\cite{Ko92}, which says that 
the function $Z({\bf t}):=\exp \F({\bf t})$ is a  
tau-function for the KdV hierarchy. For details about KdV tau-function see e.g.~\cite{BDY16, Dickey03, DYZ21, KS91, Ko92}.
It implies the original version of Witten's conjecture, 
which, together with \eqref{stringeqint} and~\eqref{dilatoneqint}, also
implies back the tau-function version. A particular case of the tau-function version says (cf. e.g.~\cite{BDY16, DYZ21, LX11-2})
\beq\label{bilinear1}
\frac{\p^2 \F({\bf t})}{\p t_1\p t_0} = \frac12 \biggl(\frac{\p^2 \F({\bf t})}{\p t_0^2}\biggr)^2
+ \frac{1}{12} \frac{\p^4 \F({\bf t})}{\p t_0^4}
\eeq
(differentiation with respect to $t_0$ gives~\eqref{KdVfree}).
The following relations were obtained by Liu--Xu~\cite{LX11-2}
from~\eqref{bilinear1}, \eqref{stringeqint}, \eqref{dilatoneqint}: 
\beq\label{LXrecursion}
(2g+n-1)  |x_I| Q_g(x_I) = \frac{|x_I|^4}{12} Q_{g-1}(x_I) +
\sum_{\substack{g_1, \, g_2\ge0 \\ g_1+g_2=g}} 
\sum_{\substack{A\neq \emptyset, B\neq \emptyset \\ A\sqcup B=I}} |x_{A}|^2 |x_{B}|^2 Q_{g_1}(x_A) Q_{g_2}(x_B)
\eeq
(cf.~also~\cite{LX11-1} for the discovery of~\eqref{LXrecursion} again assuming Witten's conjecture).
\end{remark}

The following lemma gives an identity for the even GUE free energy.
\begin{lemma}
The following identity holds for $\F^{\rm eG}=\F^{\rm eG}(x,{\bf s}_{\rm even};\e)$:
\beq\label{idFeg1}
\e \, \frac{\Lambda-1}{\Lambda+1}\biggl(\frac{\p^2\F^{\rm eG}}{\p x \p s_2}\biggr) 
= \biggl(\e \, \frac{\Lambda-1}{\Lambda+1}\biggl(\frac{\p\F^{\rm eG}}{\p s_2}\biggr)\biggr)  \cdot 
\biggl((\Lambda-1)(1-\Lambda^{-1})\biggl(\frac{\p\F^{\rm eG}}{\p x}\biggr)\biggr).
\eeq
\end{lemma}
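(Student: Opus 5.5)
The plan is to combine the Volterra lattice equation~\eqref{Volterraeq} with the tau-function relation~\eqref{weGFeG}. The aim is to show first that
\beq\label{planwFeG}
\e\,\frac{\Lambda-1}{\Lambda+1}\biggl(\frac{\p \F^{\rm eG}}{\p s_2}\biggr) = w^{\rm eG}.
\eeq
The identity~\eqref{idFeg1} should then follow by differentiating~\eqref{planwFeG} in~$x$. Throughout, $\Lambda=e^{\e\p_x}$, so $\Lambda+1=2+{\rm O}(\e)$ is invertible as a formal series in~$\e\p_x$. Hence $\frac{\Lambda-1}{\Lambda+1}$ is a well-defined operator commuting with $\p_x$ and $\p_{s_2}$.

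\emph{Step 1.} Write $w=w^{\rm eG}$ and $F=\F^{\rm eG}$. By~\eqref{weGFeG}, $\log w=(\Lambda-1)(1-\Lambda^{-1})F$. Differentiate this in $s_2$ and substitute~\eqref{Volterraeq}, which reads $\e\,\p_{s_2}\log w=(\Lambda-\Lambda^{-1})w$. Since $(\Lambda-\Lambda^{-1}) = (1-\Lambda^{-1})(\Lambda+1)$, this gives
\[
(1-\Lambda^{-1})\Bigl(\e(\Lambda-1)\p_{s_2}F-(\Lambda+1)w\Bigr)=0 .
\]
Consequently the quantity $\e(\Lambda-1)\p_{s_2}F-(\Lambda+1)w$ is annihilated by $1-\Lambda^{-1}$.

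\emph{Step 2.} This is the main point needing care: we must conclude that the quantity in Step~1 is actually zero. Viewing it as a power series in ${\bf s}_{\rm even}$ and~$\e$ whose coefficients are (Laurent/log-type) functions of~$x$, invariance under $x\mapsto x-\e$ forces each $\e$-coefficient to be independent of~$x$. Equivalently, expanding $1-\Lambda^{-1}=\e\p_x+\cdots$ order by order in~$\e$, the quantity is a function of ${\bf s}_{\rm even}$ and $\e$ only.

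To show this function vanishes, I would evaluate at $x$-independent data. The cleanest route is to use the string-type equation~\eqref{stringG} or the scaling equation~\eqref{scalingG} restricted to ${\bf s}_{\rm odd}={\bf 0}$, or simply a homogeneity argument. The free energy has the grading $x^{2-2g-n+|i|/2}\e^{2g-2}$, so the quantity is a sum of terms homogeneous of degree $1$ in $(x,\e)$ after accounting for the $s$-grading. An $x$-independent piece would then have to be proportional to a positive power of~$\e$ times monomials in~${\bf s}$, and such terms are excluded by comparing with the explicit genus expansion. As a sanity check at ${\bf s}_{\rm even}={\bf 0}$: the term ${\rm Map}_0(2)s_2x^2/\e^2=s_2x^2/\e^2$ gives $\e(\Lambda-1)\p_{s_2}F=2x+\e$ (the genus-one contribution vanishes since ${\rm Map}_1(2)=0$). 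On the other side, $w|_{{\bf s}={\bf 0}}=x$ by~\eqref{inigue} gives $(\Lambda+1)w=2x+\e$. So the difference vanishes at ${\bf s}={\bf 0}$, and I expect the grading argument to propagate this to all orders in~${\bf s}_{\rm even}$. Applying $(\Lambda+1)^{-1}$ then yields~\eqref{planwFeG}.

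\emph{Step 3.} Apply $\p_x$ to~\eqref{planwFeG}. The left-hand side becomes the left-hand side of~\eqref{idFeg1}, because $\p_x$ commutes with $\frac{\Lambda-1}{\Lambda+1}$. On the right-hand side,
\[
\p_x w=w\,\p_x\log w = w\cdot (\Lambda-1)(1-\Lambda^{-1})(\p_x F).
\]
Replacing the factor $w$ by the left-hand side of~\eqref{planwFeG} gives exactly the right-hand side of~\eqref{idFeg1}, which completes the proof.
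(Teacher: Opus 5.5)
\textbf{Verdict: genuine gap in Step 2.} Your Steps 1 and 3 are correct, and Step 3 is exactly the paper's second derivation of \eqref{idFeg1} from \eqref{tauid1}. Step 2, however, cannot be closed with the tools you propose.

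\textbf{Why Step 2 fails.} The relations \eqref{weGFeG} and \eqref{Volterraeq} see $\F^{\rm eG}$ only through $(\Lambda-1)(1-\Lambda^{-1})\F^{\rm eG}$. That expression is unchanged under $\F^{\rm eG}\mapsto\F^{\rm eG}+x\,b({\bf s}_{\rm even};\e)$, because $(\Lambda-1)(1-\Lambda^{-1})x=0$. Under this change your quantity $Q=\e(\Lambda-1)\p_{s_2}\F^{\rm eG}-(\Lambda+1)w^{\rm eG}$ shifts by $\e^2\p_{s_2}b$.

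Take $b=c\,s_4/(1-2s_2)^2$. This modification survives every check you list:
\begin{itemize}
\item The added terms $c\,x\,s_4s_2^k$ have the right homogeneity and the right genus-expansion shape (genus one, power $x^1$).
\item $b$ is annihilated by $\sum_i i(s_i-\frac12\delta_{i,2})\p_{s_i}$, so \eqref{scalingG} and \eqref{dilationG} still hold.
\item $b({\bf 0})=0$, so your ${\bf s}={\bf 0}$ check still holds.
\item \eqref{stringG} is vacuous at ${\bf s}_{\rm odd}={\bf 0}$.
\end{itemize}
Yet $Q$ changes by $4c\,\e^2s_4/(1-2s_2)^3\neq0$. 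Moreover, the right-hand side of \eqref{idFeg1} changes while its left-hand side does not, so the lemma itself depends on this normalization.

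Concretely, the $s_4$-coefficient of $Q$ equals $4\bigl({\rm Map}_1(4)-1\bigr)\e^2$; the $x$-dependent parts cancel by Step 1. Its vanishing is genus-one enumerative input mixed with genus-zero data, and no grading argument can supply it.

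\textbf{The missing ingredient.} What is needed is the exact identity $\e(\Lambda-1)\p_{s_2}\F^{\rm G}=(v^{\rm G})^2+(\Lambda+1)w^{\rm G}$, i.e.\ the normalization of $Z^{\rm G}$ as a Toda tau-function in the sense of \eqref{taun2}. This is what removes the $x\,b({\bf s})$ ambiguity, and it is what the paper uses. Setting ${\bf s}_{\rm odd}={\bf 0}$ gives \eqref{weGanother} directly. One then divides \eqref{Volterraeq} by $w^{\rm eG}$ and applies $\frac{\p_x}{\Lambda-1}\circ\frac{\Lambda}{\Lambda+1}$, or argues as in your Step 3.

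\textbf{A repair without the matrix resolvent.} You can instead fix Step 2 by leaving the even sector. Set $\widehat Q=\e(\Lambda-1)\p_{s_2}\F^{\rm G}-(v^{\rm G})^2-(\Lambda+1)w^{\rm G}$.
\begin{itemize}
\item The flows \eqref{Todaequw} and \eqref{Todauw2-v} give $\p_{s_1}\widehat Q=0$.
\item The string operator $S$ of \eqref{stringG} satisfies $S\F^{\rm G}=-xs_1/\e^2$. Hence $Sw^{\rm G}=0$, $Sv^{\rm G}=-1$ and $S\bigl(\e(\Lambda-1)\p_{s_2}\F^{\rm G}\bigr)=-2v^{\rm G}$, so $S\widehat Q=0$.
\end{itemize}
Together these reduce to $\sum_{i\ge3}is_i\p_{s_{i-1}}\widehat Q=0$ with $\p_{s_1}\widehat Q=0$. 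This forces $\widehat Q$ to be independent of ${\bf s}$: in each finite-dimensional graded piece, the largest-index variable $s_K$ with $K\ge2$ would produce an uncancelled term $(K+1)s_{K+1}\p_{s_K}$. Your ${\bf s}={\bf 0}$ computation then gives $\widehat Q=0$.
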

\begin{proof}
It is known from~e.g.~\cite{DY17} (cf.~\eqref{taun2}) that 
\beq
\e (\Lambda-1)\biggl(\frac{\F^{\rm G}(x,{\bf s};\e)}{\p s_2}\biggr) = (v^{\rm G}(x,{\bf s};\e))^2 + (\Lambda+1)(w^{\rm G}(x,{\bf s};\e)).
\eeq
Taking ${\bf s}_{\rm odd}={\bf 0}$, using~\eqref{vvanishi}, and applying $(\Lambda+1)^{-1}$ to the resulting identity, we get
\beq\label{weGanother}
\e \frac{\Lambda-1}{\Lambda+1}\biggl(\frac{\F^{\rm eG}(x,{\bf s}_{\rm even};\e)}{\p s_2}\biggr) = w^{\rm eG}(x,{\bf s}_{\rm even};\e)
\eeq
(this identity was obtained in~\cite[(3.32)]{DLYZ20}).
Dividing both sides of~\eqref{Volterraeq} by $w^{\rm eG}(x,{\bf s}_{\rm even};\e)$, applying the operator $\frac{\p_x}{\Lambda-1}\circ \frac{\Lambda}{\Lambda+1}$, 
and using \eqref{weGFeG}, \eqref{weGanother}, we obtain~\eqref{idFeg1}.
\end{proof}

We note that by~\eqref{weGFeG} and~\eqref{weGanother} the even GUE free energy $\F^{\rm eG}$ satisfies
\beq\label{tauid1}
\e \frac{\Lambda-1}{\Lambda+1}\biggl(\frac{\p\F^{\rm eG}}{\p s_2}\biggr) 
= e^{(\Lambda-1)(1-\Lambda^{-1})(\F^{\rm eG})}.
\eeq
Let us also give a derivation of~\eqref{idFeg1} just using~\eqref{tauid1}. 
Similarly to~\cite{DYZ17}, applying~$\p_x$ to both sides of~\eqref{tauid1}, we obtain
$$
\e \frac{\Lambda-1}{\Lambda+1}\biggl(\frac{\p^2\F^{\rm eG}}{\p x \p s_2}\biggr) 
= e^{(\Lambda-1)(1-\Lambda^{-1})(\F^{\rm eG})} \cdot 
\biggl((\Lambda-1)(1-\Lambda^{-1})\biggl(\frac{\p\F^{\rm eG}}{\p x}\biggr)\biggr)
$$
and, replacing the exponential term of the right-hand side using~\eqref{tauid1}, we get~\eqref{idFeg1}.

We are ready to give a new proof of the Witten--Kontsevich theorem.

\begin{theorem}[Kontsevich~\cite{Ko92}] \label{thm1} 
Witten's conjecture holds.
\end{theorem}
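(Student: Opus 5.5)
By the lemma giving the equivalent description of Witten's KdV equation, it suffices to prove the recursion \eqref{QgrecursionKdV} for all $g\ge0$, $n\ge1$. Together with \eqref{stringeqint}, \eqref{dilatoneqint} and \cite{KL07, LX11-1}, this gives the full hierarchy \eqref{KdVhie}. The plan is to read off a recursion for even GUE correlators from the identity \eqref{idFeg1}. We then pass to the limit \eqref{toinf} via \eqref{Ok}, which is extended to $(g,n)=(0,1),(0,2)$ by the preceding lemma; the limit should produce exactly \eqref{QgrecursionKdV}.

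\textbf{Step 1: coefficient extraction.} Expand both sides of \eqref{idFeg1} using \eqref{GUEfree}. Take the coefficient of $s_{2j_1}\cdots s_{2j_n}$ together with a fixed power of $\e$. The $x$-dependence is monomial in each genus, so the operators act on powers of $x$ in a controlled way. Specifically, $\frac{\Lambda-1}{\Lambda+1}$ and $(\Lambda-1)(1-\Lambda^{-1})$ act on $x^m$ through $e^{\e\p_x}$. Hence $\e\frac{\Lambda-1}{\Lambda+1}=\e\tanh(\e\p_x/2)$ and $(\Lambda-1)(1-\Lambda^{-1})=4\sinh^2(\e\p_x/2)$, each expanded in powers of $\e^2\p_x^2$. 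The derivative $\p_{s_2}$ inserts a marked face of size $2$, which by \eqref{dilationG} is harmless.

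The left side yields a linear term in ${\rm Map}_g(2,2j_I)$ times $x$-power factors. The right side yields a quadratic sum over splittings $I=A\sqcup B$ and $g_1+g_2+k=g$. Here $k$ is a genus shift coming from the $\e^{2k}$ terms of the hyperbolic expansions. The non-polynomial part of \eqref{GUEfree} (the $x^2\log x$, $\log x$ and Bernoulli terms) supplies the $n_A=0$ or $n_B=0$ boundary contributions, which must be tracked separately.

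\textbf{Step 2: asymptotics.} Set $x=1$ (i.e.\ $N=1/\e$ and recombine), or equivalently work with the polynomials in $N$. Then let $j_a\sim\kappa x_a/2$ with $\kappa\to\infty$. In the sum over $s$-monomials the power of $x$ is $2-2g-n+|i|/2$, so derivatives in $x$ produce factors of order $|i|\sim\kappa|x_I|$. The leading order therefore comes from keeping, in each hyperbolic series, the terms of top degree in $\p_x$ relative to $\e$.

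On the left, $\e\tanh(\e\p_x/2)\p_x$ gives $\e^2\p_x^2/2$ at leading order. Combined with the dilaton relation, this should produce $(2g+n-1)|x_I|^2Q_g$ after normalizing by \eqref{Ok}; the factor $2g+n-1$ arises from the dilaton-type counting. On the right, $4\sinh^2(\e\p_x/2)\p_x$ contributes $\e^2\p_x^3$, the $|x_B|^3$, and $\e\tanh(\e\p_x/2)\p_{s_2}$ contributes $|x_A|^2$. The cubic-order term $\e^4\p_x^5/12$ of $4\sinh^2$, paired with the genus-zero constant part $w\approx x$, gives $\frac{|x_I|^5}{12}Q_{g-1}$.

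The powers $2^{|i|}$ and $\kappa^{3g-3+3n/2}$ must balance between the two sides. This forces us to check that all subleading Taylor terms and all lower-genus products not listed vanish in the limit. Sums over $j$ with fixed total become Riemann sums only if the splitting has both parts large, which is exactly the case $A,B\ne\emptyset$.

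\textbf{Main obstacle.} The hardest part is Step 2's bookkeeping. We must verify that the normalizations $2^{2g-3+3n/2}\pi^{n/2}/\sqrt{x_1\cdots x_n}$ factor consistently across products, so that the $\sqrt{x_A}\sqrt{x_B}$ and $\pi$ powers match. We must also isolate the terms with $A$ or $B$ empty (coming from the $\log x$ and Bernoulli parts) and show they either cancel with left-side terms or are subleading. I would first do the case $g=0$, $n=1$ by hand with the explicit formulas \eqref{g0n2GUE} to calibrate constants. Then I would verify that the general scaling exponents agree term by term before taking the limit.
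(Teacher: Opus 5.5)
Your route is the paper's: reduce to \eqref{QgrecursionKdV}, take the $\e^{2h}$ and $s_{2j_1}\cdots s_{2j_n}$ coefficients of \eqref{idFeg1}, apply \eqref{dilationG}, and pass to the limit with \eqref{Ok} and its $(0,1)$, $(0,2)$ extension. In \eqref{idFeg1} one first subtracts $\F^{\rm eG}(x,{\bf 0};\e)$, which leaves only the exact term $(\Lambda-1)(1-\Lambda^{-1})\p_x\F^{\rm eG}(x,{\bf 0};\e)=\p_x\log x=1/x$.

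The gap is in Step~2. The recursion does not sit at leading order, so your account of where $(2h+n-1)|x_I|^2Q_h$ comes from is wrong. Write $m=|j|+2-2h-n$ for the exponent of $x$, and recall $(2^3-2)B_2=1$.
\begin{itemize}
\item The leading left-hand term, from $\frac{\e^2}{2}\p_x$ in $\e\frac{\Lambda-1}{\Lambda+1}$, is $2|j|\binom{m}{2}{\rm Map}_h(2j_I)\sim|j|^3{\rm Map}_h(2j_I)$. This is of order $\kappa^{3h+3n/2}$, one power of $\kappa$ above every term of \eqref{QgrecursionKdV}.
\item It is cancelled at that order by the right-hand term with $A=\emptyset$, which equals $m(m-1)(m-2){\rm Map}_h(2j_I)$. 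There the first factor is the genus-zero value $x$, coming from ${\rm Map}_0(2)=1$, i.e.\ from the polynomial part and not from the $\log$/Bernoulli part.
\item The third piece is the leading part of the $B=\emptyset$ term $\frac1x\cdot\e\frac{\Lambda-1}{\Lambda+1}\p_{s_2}\F$, namely $m|j|\,{\rm Map}_h(2j_I)$.
\end{itemize}
The exact combination is $|j|m(m-1)-m(m-1)(m-2)-m|j|=(2h+n-1)\,m(m-2)$. This is the origin of the factor $2h+n-1=|j|-m+1$ (edges minus vertices plus one), not the dilaton factor $2|j|$.

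Since \eqref{Ok} controls only the leading asymptotics of ${\rm Map}_h$, this combination must be carried out exactly, as polynomials in $|j|$ multiplying the same ${\rm Map}_h(2j_I)$, \emph{before} any limit is taken. That is precisely the simplification from \eqref{identgenushnumber} to \eqref{eq56}. Taking leading asymptotics term by term, as you propose, gives only the tautology $|x_I|^3Q_h=|x_I|^3Q_h$ at top order, and an $\infty-\infty$ at the order where \eqref{QgrecursionKdV} lives.

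The genus-reduction term needs the same care. At order $\kappa^{3h-1+3n/2}$ it receives two equal contributions:
\begin{itemize}
\item $-2|j|\binom{m+2}{4}{\rm Map}_{h-1}(2j_I)$ on the left, from the $-\frac{\e^4}{24}\p_x^3$ term of $\e\frac{\Lambda-1}{\Lambda+1}$;
\item $10\binom{m+2}{5}{\rm Map}_{h-1}(2j_I)$ on the right, from the $\frac{\e^4}{12}\p_x^5$ term of $(\Lambda-1)(1-\Lambda^{-1})$ with $A=\emptyset$.
\end{itemize}
Keeping only the latter, as in your sketch, gives $\frac1{24}|x_I|^5Q_{h-1}$ instead of $\frac1{12}|x_I|^5Q_{h-1}$.

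All remaining terms are $O(\kappa^{3h-2+3n/2})$ and drop out: Bernoulli index $\ge2$ on the left, $g_1+g_2>0$ in the products, $g_1\ge1$ in the $B=\emptyset$ term, and $g_2\ge2$ in the $A=\emptyset$ term. The product terms with $g_1=g_2=0$ do give $|x_A|^2|x_B|^3Q_{g_1'}(x_A)Q_{g_2'}(x_B)$, as you say. Finally, no Riemann sums occur: $j_1,\dots,j_n$ are fixed, and the splitting is only over subsets $A\sqcup B=I$.
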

\begin{proof}
We will prove the $d=1$ case, which, as mentioned in Section~\ref{section1}, together with  \eqref{stringeqint}, \eqref{dilatoneqint}
for psi-class intersection numbers implies all cases.

Let $\F^{\rm norm}(x,{\bf s}_{\rm even};\e):=\F^{\rm eG}(x,{\bf s}_{\rm even};\e)-\F^{\rm eG}(x,{\bf 0};\e)$ be
the normalized even GUE free energy. In terms of $\F^{\rm norm}(x,{\bf s}_{\rm even};\e)$,
 identity~\eqref{idFeg1} reads
\beq\label{idFeg1norm}
\e \, \frac{\Lambda-1}{\Lambda+1}\biggl(\frac{\p^2\F^{\rm norm}}{\p x \p s_2}\biggr) 
= \biggl(\e \, \frac{\Lambda-1}{\Lambda+1}\biggl(\frac{\p\F^{\rm norm}}{\p s_2}\biggr)\biggr)  \cdot 
\biggl(\frac1x+(\Lambda-1)(1-\Lambda^{-1})\biggl(\frac{\p\F^{\rm norm}}{\p x}\biggr)\biggr).
\eeq
Noting that 
\beq
\e\frac{\Lambda-1}{\Lambda+1} = \sum_{g\ge0} \e^{2g+2} \frac{2^{2g+3}-2}{(2g+2)!} B_{2g+2}\p_x^{2g+1}, 
\eeq
and for fixed $h\ge0$ comparing coefficients of~$\e^{2h}$ on both sides of~\eqref{idFeg1norm}, we obtain
\begin{align}
&\sum_{\substack{g,\,g'\ge0\\  g+g'=h}} \frac{(2^{2g+3}-2)B_{2g+2}}{(2g+2)!} \p_x^{2g+2}\biggl(\frac{\p\F^{\rm norm}_{g'}}{\p s_2}\biggr) 
\label{identgenush}\\
&=\sum_{\substack{g_1, \, g_2, \, g_1', \, g_2' \ge0 \\  g_1+g_2+g_1'+g_2'=h}} 
\frac{(2^{2 g_1+3}-2)B_{2 g_1+2} }{(2 g_1+2)!}   \p_x^{2g_1+1}\biggl(\frac{\p\F^{\rm norm}_{g_1'}}{\p s_2}\biggr) 
\biggl(\frac{\delta_{g_2,0} \delta_{g_2',0}}{x} + 2\frac{\p_x^{2g_2+3}(\F^{\rm norm}_{g_2'})}{(2g_2+2)!} 
\biggr),\nn
\end{align}
where $\F^{\rm norm}_{g}=\F^{\rm norm}_{g}(x,{\bf s}_{\rm even}):=[\e^{2g-2}] \F^{\rm norm}(x,{\bf s}_{\rm even};\e)$, $g\ge0$.
Successively taking derivatives with respect to $s_{2j_1},\dots,s_{2j_n}$ on both sides of~\eqref{identgenush}, we get
\begin{align}
&\sum_{\substack{g, \, g'\ge0 \\ g+g'=h}} \frac{(2^{2g+3}-2)B_{2g+2}}{(2g+2)!} \p_x^{2g+2} 
\biggl(\frac{\p^{1+n}\F^{\rm norm}_{g'}}{\p s_2 \p s_{2j_I}}\biggr) \nn\\
&=\sum_{A\sqcup B=I}\sum_{\substack{g_1,\,g_2,\,g_1',\,g_2' \ge0 \\  g_1+g_2+g_1'+g_2'=h}} 
\frac{(2^{2 g_1+3}-2)B_{2 g_1+2} }{(2 g_1+2)!}  \p_x^{2g_1+1}\biggl(\frac{\p^{1+|A|}\F^{\rm norm}_{g_1'}}{\p s_2 \p s_{2j_{A}}}\biggr) \nn\\
&\qquad \times \biggl( \delta_{B,\emptyset}\frac{\delta_{g_2,0} \delta_{g_2',0}}{x}  + \frac2{(2g_2+2)!} 
 \p_x^{2g_2+3}\biggl(\frac{\p^{|B|} \F^{\rm norm}_{g_2'}}{\p s_{2j_{B}}}\biggr)\biggr),\nn
\end{align}
where $I=\{1,\dots,n\}$, and for a subset $A\subset I$, $|A|$ denotes the cardinality of~$A$ and $2j_{A}=(2j_{a})_{a\in A}$.
Taking ${\bf s}_{\rm even}={\bf 0}$ gives
\begin{align}
&\sum_{\substack{g,\,g'\ge0 \\ g+g'=h}} \frac{(2^{2g+3}-2)B_{2g+2}}{(2g+2)!} \p_x^{2g+2} 
\bigl({\rm Map}_{g'}(2,2j_I)x^{2-2g' - n + |j|} \bigr) \nn\\
&=\sum_{A\sqcup B=I}\sum_{\substack{g_1,\,g_2,\,g_1',\,g_2' \ge0 \\  g_1+g_2+g_1'+g_2'=h}} 
\frac{(2^{2 g_1+3}-2)B_{2 g_1+2} }{(2 g_1+2)!}   \p_x^{2g_1+1}\bigl({\rm Map}_{g_1'}(2,2j_A)x^{2-2g_1' - |A| + |j_A|} \bigr)  \nn\\
& \quad \times \biggl(\delta_{B,\emptyset}\frac{\delta_{g_2,0} \delta_{g_2',0}}{x} +
 \frac2{(2g_2+2)!}\p_x^{2g_2+3}\bigl({\rm Map}_{g_2'}(2j_B)x^{2-2g_2' - |B| + |j_B|}\bigr)
 \biggr),\nn
\end{align}
where $|j|=j_1+\dots+j_n$ and $|j_A|=\sum_{a\in A} j_a$.
By using~\eqref{dilationG}, we then have, for any fixed $h,n\ge0$, and for all $j_1,\dots,j_n\ge1$,
\begin{align}
&\sum_{\substack{g,\,g'\ge0 \\ g+g'=h}} (2^{2g+3}-2)B_{2g+2} \cdot (2|j| {\rm Map}_{g'}(2j_I)+\delta_{g',0}\delta_{n,0})\binom{2-2g' - n + |j|}{2g+2} \label{identgenushnumber}\\
&=\sum_{A\sqcup B=I}\sum_{\substack{g_1, \,g_2, \, g_1', \, g_2' \ge0 \\  g_1+g_2+g_1'+g_2'=h}} 
\frac{(2^{2 g_1+3}-2)B_{2 g_1+2} }{2 g_1+2}  \binom{2-2g_1' - |A| + |j_A|}{2g_1+1}\nn\\
&\times (2|j_A|  {\rm Map}_{g_1'}(2j_A)+\delta_{g_1',0}\delta_{A,\emptyset}) \nn\\
&\times 
\biggl( \delta_{B,\emptyset}\delta_{g_2,0} \delta_{g_2',0} + 2(2g_2+3) \binom{2-2g_2' - |B| + |j_B|}{2g_2+3}{\rm Map}_{g_2'}(2j_B)\biggr).\nn
\end{align}
For $n\ge1$, we have, after a relatively lengthy simplification,  
\begin{align}
&(2^{3}-2)B_{2} \cdot (2h+n-1) (|j|+2-2h - n)(|j|-2h - n) {\rm Map}_{h}(2j_I) \label{eq56}\\
&+(2^{5}-2)B_{4} \cdot |j| {\rm Map}_{h-1}(2j_I) \tfrac{(|j| + 4-2h - n )(|j| + 3-2h - n)(|j|+2-2h - n)(|j|+1-2h - n)}{12}\nn\\
&+\sum_{g=2}^h (2^{2g+3}-2)B_{2g+2} \cdot 2|j| \, {\rm Map}_{h-g}(2j_I)\binom{2-2(h-g) - n + |j|}{2g+2} \nn\\
&=\sum_{\substack{A\neq \emptyset, B\neq \emptyset \\ A\sqcup B=I}}
\sum_{\substack{g_1', \, g_2' \ge0 \\  g_1'+g_2'=h}} 
(2^{3}-2)B_{2} \cdot (2-2g_1' - |A| + |j_A|) |j_A| {\rm Map}_{g_1'}(2j_A) \nn\\
&\times (|j_B|+2-2g_2' - |B|)(|j_B|+1-2g_2' - |B|)(|j_B|-2g_2' - |B|) {\rm Map}_{g_2'}(2j_B) \nn\\
&+\sum_{\substack{A\neq \emptyset, B\neq \emptyset \\ 
A\sqcup B=I}}
\sum_{\substack{g_1, \, g_2, \, g_1', \, g_2' \ge0 \\ g_1+g_2>0\\  g_1+g_2+g_1'+g_2'=h}} 
\frac{(2^{2 g_1+3}-2)B_{2 g_1+2} }{g_1+1}  \binom{2-2g_1' - |A| + |j_A|}{2g_1+1} |j_A|  {\rm Map}_{g_1'}(2j_A) \nn\\
&\times 2(2g_2+3) \binom{2-2g_2' - |B| + |j_B|}{2g_2+3}{\rm Map}_{g_2'}(2j_B) \nn\\
& + (2^{3}-2)B_{2}  \cdot 10 \binom{2-2(h-1) - n + |j|}{5}{\rm Map}_{h-1}(2j_I)\nn\\
& + \sum_{g_2=2}^h 
(2^{3}-2)B_{2}  \cdot 2(2g_2+3) \binom{2-2(h-g_2) - n + |j|}{2g_2+3}{\rm Map}_{h-g_2}(2j_I)\nn\\
&+ \sum_{g_1=1}^h
\frac{(2^{2 g_1+3}-2)B_{2 g_1+2} }{2 g_1+2}  \binom{2-2(h-g_1) - n + |j|}{2g_1+1} 
{\rm Map}_{h-g_1}(2j_I)\cdot 2|j|.\nn
\end{align}
For any fixed $h\ge0$ and fixed $n\ge1$, 
multiplying both sides of~\eqref{eq56} by 
$$\frac{2^{2h-1+\frac{3n}2} \pi^{\frac{n}2}}{\sqrt{x_1\dots x_n}} \frac1{2^{2|j|}\kappa^{3h-1+\frac{3n}2}},$$
recalling $B_2=\frac16, B_4=-\frac1{30}$,   
and taking limits on both sides of the resulting identity as~\eqref{toinf}, i.e.,
$$
2j_a \sim \kappa x_a, \quad \kappa \to \infty,
$$
we obtain~\eqref{QgrecursionKdV}. The theorem is proved.
\end{proof}

\end{document}